\def\BibTeX{{\rm B\kern-.05em{\sc i\kern-.025em b}\kern-.08em
    T\kern-.1667em\lower.7ex\hbox{E}\kern-.125emX}}
\begin{document}

\title{Transmission Scheduling for Multi-loop Wireless Networked Control Based on LQ Cost Offset\\
{\footnotesize}
\thanks{This work was supported by Key R\&D Program of China under Grant 2018YFB1801102, National S\&T Major Project 2017ZX03001011, National Natural Science Foundation of China 61631013, Foundation for Innovative Research Groups of the National Natural Science Foundation of China 61621091, Tsinghua-Qualcomm Joint Project, and Tsinghua University Initiative Scientific Research Program 20193080005.}
}

\author{\IEEEauthorblockN{He Ma\IEEEauthorrefmark{1},
Shidong Zhou\IEEEauthorrefmark{1}, Xiujun Zhang\IEEEauthorrefmark{2} and
Limin Xiao\IEEEauthorrefmark{2}}
\IEEEauthorblockA{\IEEEauthorrefmark{1}Dept. of Electronic Engineering, Tsinghua University, Beijing, China\\
\IEEEauthorrefmark{2}BNRIST, Tsinghua University, Beijing, China
\\
mah17@mails.tsinghua.edu.cn,
\{zhousd,zhangxiujun,xiaolm\}@tsinghua.edu.cn}}


\maketitle

\begin{abstract}
In this paper, transmission scheduling for multi-loop wireless networked control systems sharing the wireless channel is considered. A linear quadratic cost offset has been proposed to evaluate the performance gap induced by the non-ideal communication. A functional relationship between linear quadratic offset metric and Age of Information has been built up. Based on the offset metric, we come up with an age-based scheduling policy and numerical simulations show that there is a significant improvement compared to the former work.
\end{abstract}

\begin{IEEEkeywords}
Age of Information, WNCSs, Scheduling Policy
\end{IEEEkeywords}
\section{Introduction}
With the emerging new application of networked control, such as inteligent manufacturing, auto-driving, drones, smart grids, telemedicine, etc., there have been increasing demands for wireless networked control systems (WNCSs) due to a complicated industry environment, including but not limited to mobility requirements.
As is well known, although wireless communication has the advantage of supporting mobility, limited radio frequency resource becomes the bottleneck of sufficient, reliable and timely connections between the components of control system, such as sensors, controllers, and actuators.
There are many ways to improve the performance for WNCSs, besides spectrum efficent transmission such as MIMO and high order modulation\cite{garone2011lqg,li2013adaptive}, diversity and advanced channel coding for reliability enhancement\cite{tarable2013anytime,robinson2007sending}, etc., proper dynamic resource allocation or scheduling is also an important way to fulfill the target of WNCSs with multiple control loops which may require wireless access, especially when the radio resource is not sufficient with more and more control loops involved.  

Scheduling strategy should be designed according to the performance metric of the control system. Commonly used control performance metrics include system stability and linear quadratic cost function. Some researchers \cite{kim2003maximum,walsh2002stability} consider the connection between system stability metric and sampling delay in sensors to schedule the sampling interval.  And the stability criterion of control systems with multiple communication modes is given in \cite{hassibi1999control}. These work usually yields stability conditions which offer communication scheme designs with feasible criteria but cannot help to obtain optimal control performance. Therefore, linear quadratic (LQ) cost, which can represent the performance of control systems numerically, is widely used in optimal control and other control scenarios. The difference lies in sampling periods will influence LQ cost in some way, on which a simulation analysis\cite{xu2017lqg} has been concucted. And the sampling status is put into LQ cost criterion and the added cost is optimized\cite{henriksson2015multiple}, however, in essense, the control performance metric and the communication metric are still in an uncoupled form. In \cite{molin2014price,molin2009lqg}, the author takes the particular structure of LQG control into consideration and lists out the specific impact of the scheduling process on LQ cost, therefore optimized the whole LQ cost directly. This method, however, is only valid for LQG control and lacks generality.  
\par

We focus on the detailed relationship between communication metrics and LQ cost which is widely applied and aim to obtain performance gain by scheduling. However, due to the tight coupling of control and communication, it is difficult to directly optimize the LQ cost in general multi-loop systems. Some researchers try to decompose the problem by considering some intermediate metrics that can affect control performance and are easier to evaluate, for example, state estimation error. The effect of transmission delay\cite{huang2019retransmit} on the state estimation error metric has been analysed. 
Due to the sensitivity to state freshness of WNCSs, Age of Information (AoI) \cite{Kaul2012real} which represents the freshness of information is considered to play an important role in WNCSs. 
Different from average delay, AoI process can represent the transmission and scheduling status in communication systems at all times to some extent, and is considered to determine the estimation error in controllers in \cite{champati2019performance,ayan2019age}. Thus, AoI process is an effective tool to help design scheduling policies.
However, state estimation error is more like a communication performance than a direct performance to a control system and therefore is a partial metric of the performance of WNCSs.

Another common way to decompose is that we assume an ideal communication while designing control strategies and then assume unchanged control strategies while designing communication resources scheduling policies\cite{cervin2008scheduling,molin2009lqg}. In this situation, the performance offset between non-ideal communication WNCSs and ideal communication WNCSs can be considered as a LQ control metric as well. In the meantime, it can be extended to the case of control systems with non-optimal cost design, where our offset represents the performance gap induced by the non-ideal communication in the original design. It's beneficial for improving wireless resources scheduling that we study the relationshop between LQ cost offset and communicatin metrics. 


The main contribution of this paper is as follows: Firstly, we proposed a linear quadratic cost offset metric to evaluate the performance gap induced by the non-ideal communication. Secondly, we made a connection between our proposed control metric and communication metric AoI. What's more, a scheduling policy has been proposed according to our offset metric. A significant performance gain has been obtained compared to existing scheduling policies.

\par 
\section{System Model}
We consider a scenario where $N$ independent linear time-invariant (LTI) subsystems share a common wireless channel, which is depicted in Figure~\ref{fig1}. Each subsystem $i$ consists of a plant, a sensor, an estimator, a controller and an actuator. Actuators and plants are wired to the controller while sensors and estimators communicate through a shared wireless network. At the beginning of the time slot, each subsystem samples its system state.
The scheduler will allocate resources based on the historical information of all subsystems in every time slot. Then the chosen subsystems will transmit their sampled states to the controller side through the shared wireless channel. We make the following simplifying assumptions for our system model: (i) there are ideal network links between the controller and actuator for each subsystem which are very common in WNCSs, (ii) all the components are assumed to operate and update at the beginning of each time slot synchronously, (iii)The total delay from tx's transmission and execution of system modules is much lower than the time slot, (iv) the wireless channel is an erasure channel with transmission success probability $p$ and (v) in every time slot, each subsystem can employ one communication resource at most.
\begin{figure}[!t]
\centerline{\includegraphics[width=3in]{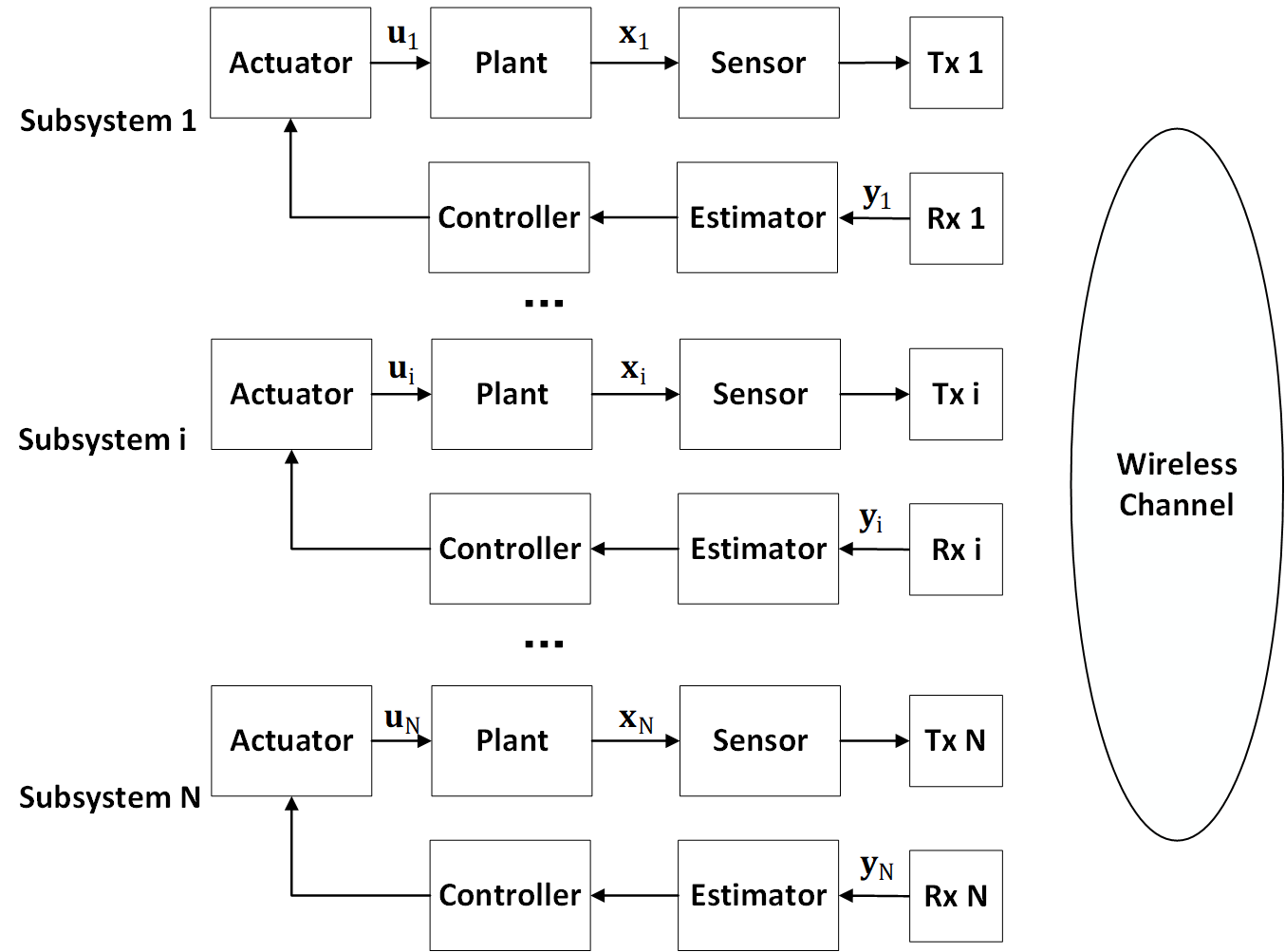}}
\caption{The multi-loop networked control systems model. Actuators and plants are wired to the controller while sensors and estimators communicate through a shared wireless network.}
\label{fig1}
\end{figure}
\subsection{System Setup}
We consider the dynamic of the $i$-th subsystem at time slot $k$ is represented by the following discrete time LTI model:
\begin{equation} \label{System model}
\mathbf{x}_{i}[k+1]=\mathbf{A}_{i}\mathbf{x}_{i}[k]+\mathbf{B}_{i}\mathbf{u}_{i}[k]+\mathbf{e}_{i}[k]
\end{equation}
with system state $\mathbf{x}_{i}$, control vector $\mathbf{u}_{i}$, system matrix $\mathbf{A}_{i}$ and input matrix $\mathbf{B}_{i}$ determined by the plant. And the system noise $\mathbf{e}_{i}$ is assumed to be an i.i.d. vector which has zero mean Gaussian distribution with diagonal covariance matrix $\mathbf{R}_i$. And each system has a initial state $\mathbf{x}_{i}[0]$ which is known at the controller side.
\subsection{Network Model}
Figure~\ref{fig2} shows how our network model works. In time slot $k$, the scheduler allocates $M$ channel resources to specific subsystems according to the scheduling policy and unscheduled  subsystems remain silent. According to assumption (iv), the channel state distribution of each subsystem is an i.i.d bernoulli distribution with expectation $p$. 
we denote the state observation received by the estimator as $\mathbf{y}_{i}[k]$. $\mathbf{y}_{i}[k]=\mathbf{x}_{i}[k]$ if and only if subsystem $i$ is scheduled to transmit in time slot $k$ and the transmittion is successful, otherwise $\mathbf{y}_{i}[k]=\text{NA}$.
\par
Packets in networked control systems are time sensitive where stale control information may degrade performance of control loops. Thus, the freshness of information, i.e. the time difference bewteen generation and arrival, is one of the most important communication metric. Age of information(AoI) is often used to describe freshness of packets in systems. If at any time slot $k$, the freshest packet received by the controller of system $i$ is generated at time slot $G_i[k]$, the AoI in system $i$ at time slot $k$ is defined as: 
\begin{equation}\label{AOI}
\Delta_i[k] = k - G_i[k]
\end{equation}
From the definition in \eqref{AOI}, we can see that AoI is directly associated with the process of scheduling policy and channel state which will influence information update, and represents the freshness of system state information.
\par
\begin{figure}[!t]
\centerline{\includegraphics[width=3.5in]{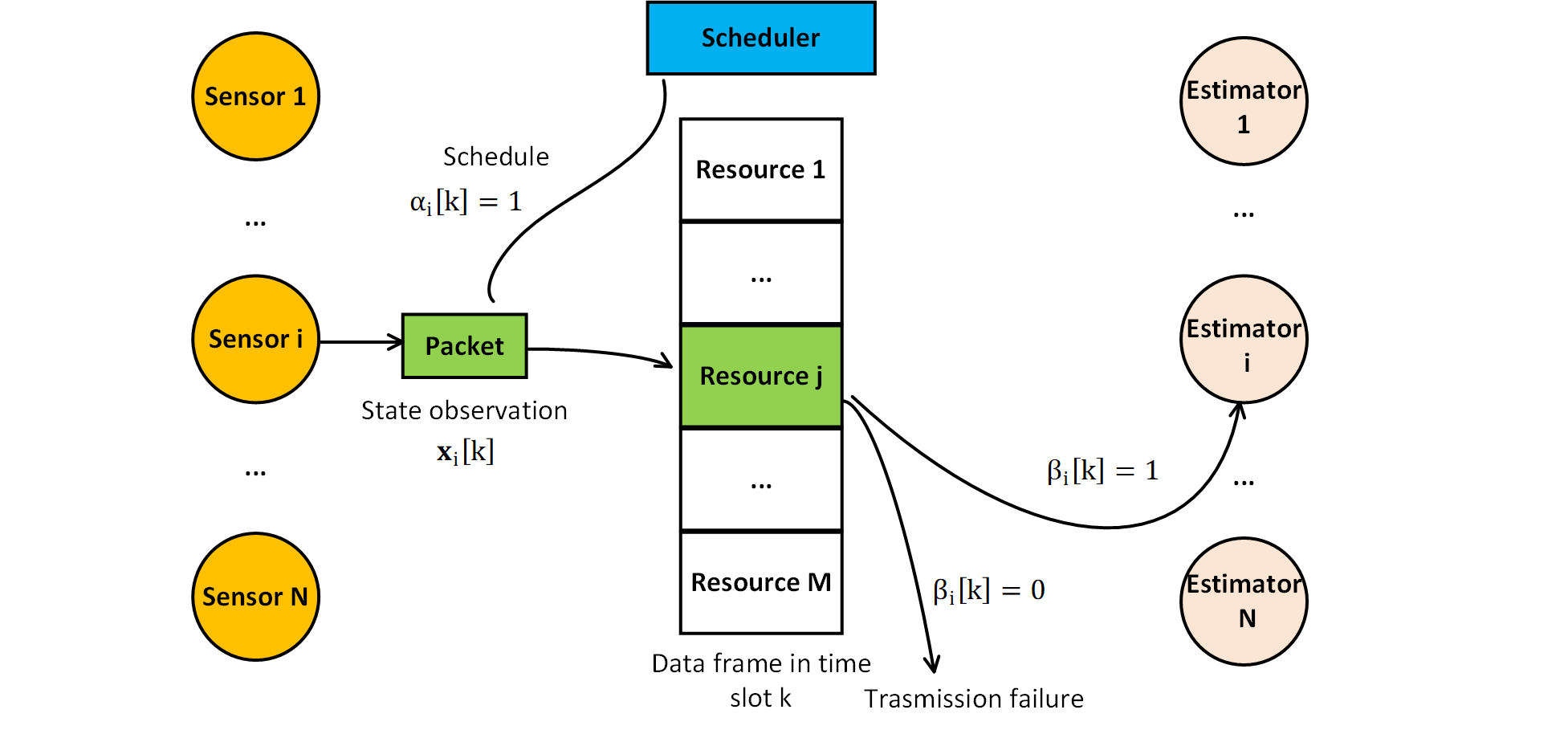}}
\caption{The wireless network model in time slot $k$.}
\label{fig2}
\end{figure}
\subsection{Controller and Estimator Model}
According to assumption (iii), control vector $\mathbf{u}_{i}[k]$ at time slot $k$ is generated by the system state at the same time slot and then fed back to the plants to stabilize corresponding subsystems. In this paper, we suppose all subsystems have taken the deterministic control law as follows:
\begin{equation}\label{Control Law}
\mathbf{u}_{i}[k] = \mathbf{K}_i\hat{\mathbf{x}}_i[k]
\end{equation}
where $\mathbf{K}_i$ is the stationary control gain determined by the control strategy and $\hat{\mathbf{x}}_i[k]$ is the estimation of subsystem $i$'s state at time slot $k$.
\par
The principle of estimation is as follows: If the packet of subsystem $i$ in time slot $k$ has been received successfully, the estimator can directly use the state observation $\mathbf{y}_i[k]$ as the estimation, otherwise it will use the estimation of the last time slot $\hat{\mathbf{x}}_i[k-1]$ to iterate.
$\alpha_i[k]$ and $\beta_i[k]$ are used to describe the scheduling result and channel state of subsystem $i$ on time slot $k$ respectively. Namely, $\alpha_i[k]=1$ means subsystem $i$ can transmit its state observation at time slot $k$ based on the scheduling result and $\beta_i[k]=1$ means subsystem $i$ can successfully transmit its data at time slot $k$. If $\alpha_i[k]=1$ and $\beta_i[k]=1$, we have $\mathbf{y}_{i}[k] = \alpha_i[k]\beta_i[k]\mathbf{x}_{i}[k]$.
Thus, the dynamics of the estimator can be written as:
\begin{equation}\label{estimator}
\begin{aligned}
\hat{\mathbf{x}}_i[k] & = \alpha_i(k)\beta_i(k)\mathbf{x}_i[k]+\\
&(1-\alpha_i(k)\beta_i(k))(\mathbf{A}_{i}\hat{\mathbf{x}}_i[k-1]+\mathbf{B}_{i}\mathbf{K}_i\hat{\mathbf{x}}_i[k-1])
\end{aligned}
\end{equation}
\par
Then by combining \eqref{System model}, \eqref{Control Law} and \eqref{estimator}, the whole system dynamics is:
\begin{equation}\label{Complete Dynamics}
\left\{ 
\begin{array}{lr} 
\mathbf{x}_{i}[k+1]=\mathbf{A}_{i}\mathbf{x}_{i}[k]+\mathbf{B}_{i}\mathbf{u}_{i}[k]+\mathbf{e}_i[k], & \\ 
\mathbf{u}_{i}[k] = \mathbf{K}_i\hat{\mathbf{x}}_i[k], &\\ 
\hat{\mathbf{x}}_i[k] = \alpha_i(k)\beta_i(k)\mathbf{x}_i[k]+ &\\
(1-\alpha_i(k)\beta_i(k))(\mathbf{A}_{i}\hat{\mathbf{x}}_i[k-1]+\mathbf{B}_{i}\mathbf{K}_i\hat{\mathbf{x}}_i[k-1]) &
\end{array} 
\right. 
\end{equation}
\section{Control Performance Metrics}
We first consider Linear Quadratic average-cost criterion \cite{molin2014price} which can describe the control performance of each subsystem as:
\begin{equation}\label{Linear Quadratic average-cost criterion 1}
\begin{aligned}
& J_i & =  \lim_{T \to +\infty}\frac{1}{T}\sum_{k=1}^{T}\mathbb{E}\{{\mathbf{x}_{i}}^\mathbf{T}[k]\mathbf{Q}_{i}\mathbf{x}_{i}[k]+{\mathbf{u}_{i}}^\mathbf{T}[k]\mathbf{P}_{i}{\mathbf{u}_{i}}[k]\}
\end{aligned}
\end{equation}
where $\mathbf{Q}_{i}$ represents the state weights matrix of subsystem $i$ and $\mathbf{P}_{i}$ represents the control energy weights matrix of subsystem $i$.

And the sum cost is considered as the overall criterion to evaluate the performance of all subsystems
\begin{equation}\label{Linear Quadratic average-cost criterion 2}
\begin{aligned}
J & =  \sum_{i=1}^{N}J_i \\
& =\lim_{T \to +\infty}\frac{1}{T}\sum_{i=1}^{N}\sum_{k=1}^{T}\mathbb{E}\{{\mathbf{x}_{i}}^\mathbf{T}[k]\mathbf{Q}_{i}\mathbf{x}_{i}[k]+{\mathbf{u}_{i}}^\mathbf{T}[k]\mathbf{P}_{i}{\mathbf{u}_{i}}[k]\}
\end{aligned}
\end{equation}

Once the control strategy is determined, there will be a corresponding LQ cost. Non-ideal communication will affect the control effect and make systems deviate from the established design. We define this performance deviation, which we called as Linear Quadratic cost offset, as a control metric of the performance offset between ideal and non-ideal communication WNCSs. If we set $\mathbf{x}_{\text{opt},i}[k]$ and $\mathbf{u}_{\text{opt},i}[k]$ as the ideal system state and control vector when there are zero packet-drop and unlimited slot resources in time interval $[k-\Delta_i[k],k]$, the error between non-ideal state and ideal state is $\overline{\mathbf{x}}_i[k] = \mathbf{x}_{i}[k] - \mathbf{x}_{\text{opt},i}[k] $. Noted that when the system observation has been transmitted successfully in time slot $k-\Delta_i[k]$, we have $\overline{\mathbf{x}}_i[k-\Delta_i[k]] = 0$. Therefore the Linear Quadratic cost offset is defined as:
\begin{equation}\label{control cost gap defination}
\begin{aligned}
J_{\text{offset}} = &\lim_{T \to +\infty}\frac{1}{T}\sum_{i=1}^{N}\sum_{k=1}^{T}\mathbb{E}\{{\mathbf{x}_i}^\mathbf{T}[k]\mathbf{Q}_{i}\mathbf{x}_i[k]+{\mathbf{u}_{i}}^\mathbf{T}[k]\mathbf{P}_{i}{\mathbf{u}_{i}}[k]\\
&-{\mathbf{x}_{\text{opt},i}}^\mathbf{T}[k]\mathbf{Q}_{i}\mathbf{x}_{\text{opt},i}[k]+{\mathbf{u}_{\text{opt},i}}^\mathbf{T}[k]\mathbf{P}_{i}{\mathbf{u}_{\text{opt},i}}[k]\}
\end{aligned}
\end{equation}
This metric actually represents the average accumulated performance gap induced by non-ideal communication as time proceeded. 
\par
\section{Analysis of Control Cost Offset and AoI}
In this section, we will analyse the relationship between cost offset metric and Age of Information. 
\newtheorem{lemma}{Lemma}
\begin{lemma} \label{lemma0}
For the WNCSs described in \eqref{estimator} and \eqref{Complete Dynamics}, the state $\mathbf{x}_i[k]$ in subsystem $i$ at time slot $k$ with nonzero AoI $\Delta_i[k]$ has the following form:
\begin{equation}\label{estimator2}
\begin{aligned}
\mathbf{x}_i[k] & = (\mathbf{A}_i+\mathbf{B}_i\mathbf{K}_i)^{\Delta_i[k]}\mathbf{x}_{i}[k-\Delta_i[k]]\\
&+\sum_{j=0}^{\Delta_i[k]-1}{\mathbf{A}_i}^{j}\mathbf{e}_{i}[k-j-1]\\
\end{aligned}
\end{equation}
\end{lemma}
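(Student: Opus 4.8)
The plan is to exploit the fact that, by the definition of AoI in \eqref{AOI}, a nonzero age $\Delta_i[k]$ means no fresh observation has reached the controller of subsystem $i$ during the window from $\tau := k-\Delta_i[k]$ up to $k$. Accordingly, I would first characterize this silent window: the packet generated at $\tau$ was received, so $\alpha_i[\tau]\beta_i[\tau]=1$ and the estimator \eqref{estimator} resets to $\hat{\mathbf{x}}_i[\tau]=\mathbf{x}_i[\tau]$, whereas for every slot $m$ with $\tau < m \le k$ we must have $\alpha_i[m]\beta_i[m]=0$ (otherwise the freshest packet held at $k$ would have been generated after $\tau$, contradicting $G_i[k]=\tau$). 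In each such slot the estimator \eqref{estimator} collapses to the deterministic update $\hat{\mathbf{x}}_i[m]=(\mathbf{A}_i+\mathbf{B}_i\mathbf{K}_i)\hat{\mathbf{x}}_i[m-1]$, so iterating from the reset yields the closed form $\hat{\mathbf{x}}_i[k]=(\mathbf{A}_i+\mathbf{B}_i\mathbf{K}_i)^{\Delta_i[k]}\mathbf{x}_i[\tau]$. This already produces the homogeneous term of \eqref{estimator2}.

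Next I would introduce the estimation error $\mathbf{d}_i[m]:=\mathbf{x}_i[m]-\hat{\mathbf{x}}_i[m]$ on the same window and show that it carries exactly the accumulated noise. Subtracting the estimator update above from the closed-loop state equation $\mathbf{x}_i[m]=\mathbf{A}_i\mathbf{x}_i[m-1]+\mathbf{B}_i\mathbf{K}_i\hat{\mathbf{x}}_i[m-1]+\mathbf{e}_i[m-1]$ read off from \eqref{Complete Dynamics}, the control-coupling term $\mathbf{B}_i\mathbf{K}_i\hat{\mathbf{x}}_i[m-1]$ cancels and the error satisfies the clean, noise-driven recursion $\mathbf{d}_i[m]=\mathbf{A}_i\mathbf{d}_i[m-1]+\mathbf{e}_i[m-1]$, initialized at $\mathbf{d}_i[\tau]=0$ by the reset. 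A one-line induction on the step count then unrolls this into $\mathbf{d}_i[k]=\sum_{j=0}^{\Delta_i[k]-1}\mathbf{A}_i^{\,j}\mathbf{e}_i[k-j-1]$, which is precisely the stochastic sum appearing in \eqref{estimator2}.

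Finally, adding the two pieces via $\mathbf{x}_i[k]=\hat{\mathbf{x}}_i[k]+\mathbf{d}_i[k]$ recovers the claimed identity. The algebra here is light; the step I expect to be the main obstacle is the endpoint bookkeeping, namely justifying rigorously that $\alpha_i[m]\beta_i[m]=0$ for all $m$ in $(\tau,k]$ (including $m=k$, since $\Delta_i[k]\neq 0$) directly from $\Delta_i[k]=k-G_i[k]$, together with keeping the index shifts in the noise sum consistent so that the $j=0$ term reads $\mathbf{e}_i[k-1]$ while the top term reads $\mathbf{A}_i^{\Delta_i[k]-1}\mathbf{e}_i[\tau]$. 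Once the silent window is correctly identified, the cancellation of the $\mathbf{B}_i\mathbf{K}_i\hat{\mathbf{x}}_i$ term makes everything else routine.
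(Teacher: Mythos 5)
Your proposal is correct, and it follows the paper's proof for its first half: both arguments begin by observing that nonzero $\Delta_i[k]$ pins down a silent window $(k-\Delta_i[k],k]$ over which $\alpha_i\beta_i=0$, so the estimator \eqref{estimator} unrolls to $\hat{\mathbf{x}}_i[k]=(\mathbf{A}_i+\mathbf{B}_i\mathbf{K}_i)^{\Delta_i[k]}\mathbf{x}_i[k-\Delta_i[k]]$. Where you diverge is the second half. The paper substitutes this estimate back into \eqref{Complete Dynamics} and iterates the state $\mathbf{x}_i[k-\Delta_i[k]+l]$ directly, carrying both the homogeneous $(\mathbf{A}_i+\mathbf{B}_i\mathbf{K}_i)^{l}$ part and the accumulating noise in one induction. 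You instead split $\mathbf{x}_i[k]=\hat{\mathbf{x}}_i[k]+\mathbf{d}_i[k]$ and show the error obeys the autonomous recursion $\mathbf{d}_i[m]=\mathbf{A}_i\mathbf{d}_i[m-1]+\mathbf{e}_i[m-1]$ with $\mathbf{d}_i[\tau]=0$, because the $\mathbf{B}_i\mathbf{K}_i\hat{\mathbf{x}}_i[m-1]$ coupling cancels. This decomposition is a bit cleaner: the induction acts on a purely noise-driven variable, and it sidesteps an index slip in the paper's intermediate line, where the general-$l$ noise sum is written as $\sum_{j=0}^{l-1}\mathbf{A}_i^{j}\mathbf{e}_i[k-j-1]$ but should read $\sum_{j=0}^{l-1}\mathbf{A}_i^{j}\mathbf{e}_i[k-\Delta_i[k]+l-j-1]$ for $l<\Delta_i[k]$ (the two coincide only at $l=\Delta_i[k]$). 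Your version also makes Remark~1 and Lemma~2 immediate, since $\mathbf{d}_i[k]$ is exactly the object whose covariance is needed there. The one point you rightly flag --- that $\alpha_i[m]\beta_i[m]=0$ for every $m$ with $k-\Delta_i[k]<m\le k$ --- does follow directly from $G_i[k]=k-\Delta_i[k]$ under the model's convention that a successful transmission in slot $m$ delivers the state sampled in slot $m$; the paper asserts this in one sentence rather than arguing it, so you are not missing anything it supplies.
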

\begin{proof}
Nonzero $\Delta_i[k]$ means that estimator has successfully received the actual state $\mathbf{x}_{i}[k-\Delta_i[k]]$ at time slot $k-\Delta_i[k]$ and from that time slot on this subsystem did not get any state updated. Thus from~\eqref{estimator}, we can easily obtain the state estimation $\hat{\mathbf{x}}_i[k]$ as:
\begin{equation}\nonumber
\begin{aligned}
&\hat{\mathbf{x}}_i[k]=(\mathbf{A}_i+\mathbf{B}_i\mathbf{K}_i)^{\Delta_i[k]}\mathbf{x}_{i}[k-\Delta_i[k]]\\
\end{aligned}
\end{equation}
Applying the system dynamics~\eqref{Complete Dynamics} and the estimation above, we can get the state $\mathbf{x}_{i}[k]$ iteratively.
\begin{equation}\nonumber
\begin{aligned}
\mathbf{x}_{i}[k-\Delta_i[k]+1]&=(\mathbf{A}_i+\mathbf{B}_i\mathbf{K}_i)\mathbf{x}_{i}[k-\Delta_i[k]]\\
&+\mathbf{e}_i[k-\Delta_i[k]]\\ 
\mathbf{x}_{i}[k-\Delta_i[k]+2]&=\mathbf{A}_{i}\mathbf{x}_{i}[k-\Delta_i[k]+1]\\
+\mathbf{B}_i\mathbf{K}_i(\mathbf{A}_i&+\mathbf{B}_i\mathbf{K}_i)\mathbf{x}_{i}[k-\Delta_i[k]]+\mathbf{e}_i[k-\Delta_i[k]+1]\\ 
&=(\mathbf{A}_i+\mathbf{B}_i\mathbf{K}_i)^{2}\mathbf{x}_{i}[k-\Delta_i[k]]\\
&+\mathbf{A}_{i}\mathbf{e}_i[k-\Delta_i[k]]+\mathbf{e}_i[k-\Delta_i[k]+1]\\
&......\\
\mathbf{x}_{i}[k-\Delta_i[k]+l]&=(\mathbf{A}_i+\mathbf{B}_i\mathbf{K}_i)^{l}\mathbf{x}_{i}[k-\Delta_i[k]]\\
&+\sum_{j=0}^{l-1}{\mathbf{A}_i}^{j}\mathbf{e}_{i}[k-j-1]\\
&......\\
\mathbf{x}_{i}[k]&=(\mathbf{A}_i+\mathbf{B}_i\mathbf{K}_i)^{\Delta_i[k]}\mathbf{x}_{i}[k-\Delta_i[k]]\\
&+\sum_{j=0}^{\Delta_i[k]-1}{\mathbf{A}_i}^{j}\mathbf{e}_{i}[k-j-1]\\
\end{aligned}
\end{equation}
\end{proof}
The offset between non-ideal state and ideal state can be derived from system dynamics and Lemma~\ref{lemma0}, leading to the following lemma.
\begin{lemma} \label{lemma1}
For NCSs described in \eqref{Complete Dynamics}, the state offset in system $i$ at time slot $k$ can be written as:
\begin{equation}\label{state error}
\begin{aligned}
\overline{\mathbf{x}}_i[k] = \sum_{j=0}^{\Delta_i[k]-1}({\mathbf{A}_i}^{j}-{(\mathbf{A}_i+\mathbf{B}_i\mathbf{K}_i)}^{j})\mathbf{e}_i[k-j-1]
\end{aligned}
\end{equation}
\end{lemma}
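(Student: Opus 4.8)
The plan is to obtain a closed form for the ideal trajectory $\mathbf{x}_{\text{opt},i}[k]$ by exactly the same backward-unrolling argument used in Lemma~\ref{lemma0}, and then to subtract it from the expression already established there. Since $\overline{\mathbf{x}}_i[k]=\mathbf{x}_i[k]-\mathbf{x}_{\text{opt},i}[k]$ and Lemma~\ref{lemma0} supplies $\mathbf{x}_i[k]$ outright, everything reduces to pinning down $\mathbf{x}_{\text{opt},i}[k]$ and checking that the two trajectories share a common value at the reference slot $k-\Delta_i[k]$.

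First I would fix the dynamics of the ideal system. Because the ideal trajectory is defined under zero packet-drop and unlimited slot resources on $[k-\Delta_i[k],k]$, its estimator receives the true state in every slot, so $\hat{\mathbf{x}}_{\text{opt},i}=\mathbf{x}_{\text{opt},i}$ identically. Substituting this into \eqref{Complete Dynamics} collapses the plant recursion to the fully closed-loop form
\begin{equation}\nonumber
\mathbf{x}_{\text{opt},i}[m+1]=(\mathbf{A}_i+\mathbf{B}_i\mathbf{K}_i)\mathbf{x}_{\text{opt},i}[m]+\mathbf{e}_i[m].
\end{equation}
I would then iterate this from the matched boundary condition $\mathbf{x}_{\text{opt},i}[k-\Delta_i[k]]=\mathbf{x}_i[k-\Delta_i[k]]$, which is precisely the stated fact $\overline{\mathbf{x}}_i[k-\Delta_i[k]]=0$, obtaining by induction
\begin{equation}\nonumber
\mathbf{x}_{\text{opt},i}[k]=(\mathbf{A}_i+\mathbf{B}_i\mathbf{K}_i)^{\Delta_i[k]}\mathbf{x}_i[k-\Delta_i[k]]+\sum_{j=0}^{\Delta_i[k]-1}(\mathbf{A}_i+\mathbf{B}_i\mathbf{K}_i)^{j}\mathbf{e}_i[k-j-1].
\end{equation}

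Finally I would subtract this from the Lemma~\ref{lemma0} expression for $\mathbf{x}_i[k]$. The two leading homogeneous terms are identical, both equal to $(\mathbf{A}_i+\mathbf{B}_i\mathbf{K}_i)^{\Delta_i[k]}\mathbf{x}_i[k-\Delta_i[k]]$ thanks to the common boundary value, so they cancel, and the noise sums combine termwise into the claimed difference-of-powers form. The main obstacle is not the algebra but identifying the correct ideal recursion: one must recognise that the noise $\mathbf{e}_i$ propagates under the open-loop matrix $\mathbf{A}_i$ in the non-ideal system, because the frozen estimate feeds back a stale control that does not track the fresh noise, yet under the closed-loop matrix $\mathbf{A}_i+\mathbf{B}_i\mathbf{K}_i$ in the ideal system. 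This mismatch between the two propagation operators is exactly what survives the cancellation and yields $\mathbf{A}_i^{j}-(\mathbf{A}_i+\mathbf{B}_i\mathbf{K}_i)^{j}$ as the coefficient of each noise term; keeping the two recursions straight is the crux of the argument.
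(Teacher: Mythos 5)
Your proposal is correct and follows essentially the same route as the paper: the paper's proof likewise subtracts the closed-loop ideal trajectory $(\mathbf{A}_i+\mathbf{B}_i\mathbf{K}_i)^{\Delta_i[k]}\mathbf{x}_i[k-\Delta_i[k]]+\sum_{j=0}^{\Delta_i[k]-1}(\mathbf{A}_i+\mathbf{B}_i\mathbf{K}_i)^{j}\mathbf{e}_i[k-j-1]$ from the Lemma~\ref{lemma0} expression, cancelling the common homogeneous term. Your write-up is if anything slightly more complete, since you explicitly derive the ideal recursion from $\hat{\mathbf{x}}_{\text{opt},i}=\mathbf{x}_{\text{opt},i}$ and the shared boundary value, a step the paper leaves implicit.
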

\begin{proof}
Consider the ideal conditon where $\alpha$ and $\beta$ are always one and the non-ideal condition where systems have nonzero  AOI, $\Delta_i[k]$, caused by scheduling and packet loss. The offset accumulates from time slot $k-\Delta_i[k]$ to time slot $k$. Using Lemma~\ref{lemma0} and \eqref{Complete Dynamics}, the state offset is:
\begin{equation}\nonumber
\begin{aligned}
\overline{\mathbf{x}}_i[k] &= \mathbf{x}_{i}[k] - \mathbf{x}_{\text{opt},i}[k] \\
& = (\mathbf{A}_i+\mathbf{B}_i\mathbf{K}_i)^{\Delta_i[k]}\mathbf{x}_{i}[k-\Delta_i[k]]\\
&+\sum_{j=0}^{\Delta_i[k]-1}{\mathbf{A}_i}^{j}\mathbf{e}_{i}[k-j-1]\\
&-(\mathbf{A}_i+\mathbf{B}_i\mathbf{K}_i)^{\Delta_i[k]}\mathbf{x}_{i}[k-\Delta_i[k]]\\
& -\sum_{j=0}^{\Delta_i[k]-1}{(\mathbf{A}_i+\mathbf{B}_i\mathbf{K}_i)}^{j}\mathbf{e}_{i}[k-j-1]\\
& =  \sum_{j=0}^{\Delta_i[k]-1}({\mathbf{A}_i}^{j}-{(\mathbf{A}_i+\mathbf{B}_i\mathbf{K}_i)}^{j})\mathbf{e}_i[k-j-1]
\end{aligned}
\end{equation}
\end{proof}
It can be easily seen from Lemma \ref{lemma1} that, when $\Delta_i[k] = 1$, the state offset is zero as well. In this situation, performance of this system doesn't get degraded beacuse the state in time slot $k$ is generated from the state in time slot $k-1$, which is successfully updated. Transmission status in time slot $k$ will affect system states in time slot $k+1$. So scheduling in current time slot will influence the performance of our WNCSs in the next time slot.
\par
\newtheorem{remark}{Remark}
\begin{remark} \label{remark1}
As ${e}_i[k]$ is i.i.d. zero-mean gaussian noise, it is obvious that the state error $\overline{\mathbf{x}}_i[k]$ which is the linear combination of Gaussian noise, has a Gaussian distribution and zero expectation, $\mathbb{E}[\overline{\mathbf{x}}_i[k]] = 0$.
\end{remark}

\par
By using previous lemmas and the definition of LQ cost offset metric, we can get the following result about our metric and AoI.
\begin{lemma} \label{lemma2}
For WNCSs described in \eqref{Complete Dynamics} and cost offset metric described in \eqref{control cost gap defination}, the time average Linear Quadratic cost offset in subsystem $i$ at time slot $k$ can be written as:
\begin{equation}\label{LQ error}
\begin{aligned}
J_{\text{offset}} & = \lim_{T \to +\infty}\frac{1}{T}\sum_{i=1}^{N}\sum_{k=1}^{T}\sum_{j=0}^{\Delta_i[k]-1}\mathbf{Tr}\{\left[\mathbf{Q}_{i}+{\mathbf{K}_{i}}^\mathbf{T}\mathbf{P}_{i}\mathbf{K}_{i}\right]\\
&{({\mathbf{A}_i}^{j}-{(\mathbf{A}_i+\mathbf{B}_i\mathbf{K}_i)}^{j})}\mathbf{R}_i{({\mathbf{A}_i}^{j}-{(\mathbf{A}_i+\mathbf{B}_i\mathbf{K}_i)}^{j})}^T\}\\
\end{aligned}
\end{equation}
\end{lemma}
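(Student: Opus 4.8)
The plan is to collapse the difference of two quadratic costs in \eqref{control cost gap defination} into a single quadratic form in the state offset $\overline{\mathbf{x}}_i[k]$ supplied by Lemma~\ref{lemma1}, and then evaluate that form with the trace identity together with the noise covariance $\mathbf{R}_i$. First I would fix $i$ and $k$ and split the summand of $J_{\text{offset}}$ into a state part $\mathbb{E}\{\mathbf{x}_i^{\mathbf{T}}[k]\mathbf{Q}_i\mathbf{x}_i[k]-\mathbf{x}_{\text{opt},i}^{\mathbf{T}}[k]\mathbf{Q}_i\mathbf{x}_{\text{opt},i}[k]\}$ and an analogous control part with weight $\mathbf{P}_i$. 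Substituting $\mathbf{x}_i[k]=\mathbf{x}_{\text{opt},i}[k]+\overline{\mathbf{x}}_i[k]$ and writing the control offset as $\mathbf{u}_i[k]-\mathbf{u}_{\text{opt},i}[k]=\mathbf{K}_i\overline{\mathbf{x}}_i[k]$, both contributions are then driven by the single random vector $\overline{\mathbf{x}}_i[k]$.

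Next I would expand each quadratic form. This produces three groups of terms: the ideal-squared terms $\mathbf{x}_{\text{opt},i}^{\mathbf{T}}[k]\mathbf{Q}_i\mathbf{x}_{\text{opt},i}[k]$ and $\mathbf{u}_{\text{opt},i}^{\mathbf{T}}[k]\mathbf{P}_i\mathbf{u}_{\text{opt},i}[k]$, which cancel exactly against the subtracted ideal cost; the cross terms, which are linear in $\overline{\mathbf{x}}_i[k]$; and the pure offset terms $\overline{\mathbf{x}}_i^{\mathbf{T}}[k]\mathbf{Q}_i\overline{\mathbf{x}}_i[k]$ and $\overline{\mathbf{x}}_i^{\mathbf{T}}[k]\mathbf{K}_i^{\mathbf{T}}\mathbf{P}_i\mathbf{K}_i\overline{\mathbf{x}}_i[k]$. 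Arguing (see the last paragraph) that the cross terms vanish in expectation, the surviving part is $\mathbb{E}\{\overline{\mathbf{x}}_i^{\mathbf{T}}[k](\mathbf{Q}_i+\mathbf{K}_i^{\mathbf{T}}\mathbf{P}_i\mathbf{K}_i)\overline{\mathbf{x}}_i[k]\}$. Applying the identity $\mathbb{E}\{\overline{\mathbf{x}}^{\mathbf{T}}\mathbf{M}\overline{\mathbf{x}}\}=\mathbf{Tr}\{\mathbf{M}\,\mathbb{E}[\overline{\mathbf{x}}\overline{\mathbf{x}}^{\mathbf{T}}]\}$ with $\mathbf{M}=\mathbf{Q}_i+\mathbf{K}_i^{\mathbf{T}}\mathbf{P}_i\mathbf{K}_i$ moves the weights outside, so it remains only to compute $\mathbb{E}[\overline{\mathbf{x}}_i[k]\overline{\mathbf{x}}_i^{\mathbf{T}}[k]]$. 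Plugging in the expression for $\overline{\mathbf{x}}_i[k]$ from Lemma~\ref{lemma1} and using that the $\mathbf{e}_i[k-j-1]$ are i.i.d., zero-mean, with covariance $\mathbf{R}_i$, every cross-time product drops out and only the $j=j'$ diagonal survives, giving $\sum_{j=0}^{\Delta_i[k]-1}({\mathbf{A}_i}^{j}-(\mathbf{A}_i+\mathbf{B}_i\mathbf{K}_i)^{j})\mathbf{R}_i({\mathbf{A}_i}^{j}-(\mathbf{A}_i+\mathbf{B}_i\mathbf{K}_i)^{j})^{\mathbf{T}}$. Substituting this into the trace and restoring the outer $\lim_{T\to\infty}\tfrac{1}{T}\sum_{i=1}^{N}\sum_{k=1}^{T}$ yields the claimed formula.

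The step I expect to be the main obstacle is the vanishing of the cross terms, e.g. $\mathbb{E}\{\mathbf{x}_{\text{opt},i}^{\mathbf{T}}[k]\mathbf{Q}_i\overline{\mathbf{x}}_i[k]\}$ and its control analogue. This is delicate because $\overline{\mathbf{x}}_i[k]$ and $\mathbf{x}_{\text{opt},i}[k]$ are both constructed from the same noise samples over the window $[k-\Delta_i[k],k-1]$, so they are not manifestly independent; the argument really rests on the zero-mean property $\mathbb{E}\{\overline{\mathbf{x}}_i[k]\}=0$ from Remark~\ref{remark1} together with a careful decorrelation of the offset from the ideal trajectory. A secondary subtlety is that $\Delta_i[k]$ is itself random through the scheduling and channel processes, so all of the above expectations should be read conditionally on the AoI realization; since the noise $\mathbf{e}_i$ is independent of the scheduling and channel variables $\alpha_i,\beta_i$, conditioning on $\Delta_i[k]$ leaves the noise statistics intact, and the outer time-average over $k$ then absorbs the AoI distribution.
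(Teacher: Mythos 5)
Your route is the same one the paper takes: collapse the difference of quadratic costs to $\mathbb{E}\{\overline{\mathbf{x}}_i^{\mathbf{T}}[k](\mathbf{Q}_i+\mathbf{K}_i^{\mathbf{T}}\mathbf{P}_i\mathbf{K}_i)\overline{\mathbf{x}}_i[k]\}$, then evaluate it via the trace identity, Lemma~\ref{lemma1}, and the i.i.d.\ noise covariance $\mathbf{R}_i$. The second half of your sketch (trace identity, only the $j=j'$ diagonal surviving, conditioning on the AoI realization) matches the paper's computation, and your handling of the randomness of $\Delta_i[k]$ is actually more careful than the paper's.

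However, the step you flagged as the main obstacle is a genuine gap, and it is not closed by zero-mean plus ``careful decorrelation'' --- nor does the paper close it: its proof silently rewrites $\mathbb{E}\{\mathbf{x}_i^{\mathbf{T}}M\mathbf{x}_i-\mathbf{x}_{\text{opt},i}^{\mathbf{T}}M\mathbf{x}_{\text{opt},i}\}$ as $\mathbb{E}\{\overline{\mathbf{x}}_i^{\mathbf{T}}M\overline{\mathbf{x}}_i\}$ (even with a sign slip, substituting $\mathbf{x}_{\text{opt},i}=\mathbf{x}_i+\overline{\mathbf{x}}_i$). Concretely, by the same iteration as Lemma~\ref{lemma0} the ideal trajectory is $\mathbf{x}_{\text{opt},i}[k]=(\mathbf{A}_i+\mathbf{B}_i\mathbf{K}_i)^{\Delta_i[k]}\mathbf{x}_i[k-\Delta_i[k]]+\sum_{j=0}^{\Delta_i[k]-1}(\mathbf{A}_i+\mathbf{B}_i\mathbf{K}_i)^{j}\mathbf{e}_i[k-j-1]$, so it is driven by exactly the noise samples that build $\overline{\mathbf{x}}_i[k]$. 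Only the part of the cross term involving $\mathbf{x}_i[k-\Delta_i[k]]$ vanishes by the independence-plus-zero-mean argument you give; the noise-driven part contributes $2\sum_{j=0}^{\Delta_i[k]-1}\mathbf{Tr}\{((\mathbf{A}_i+\mathbf{B}_i\mathbf{K}_i)^{j})^{\mathbf{T}}M(\mathbf{A}_i^{j}-(\mathbf{A}_i+\mathbf{B}_i\mathbf{K}_i)^{j})\mathbf{R}_i\}$ with $M=\mathbf{Q}_i+\mathbf{K}_i^{\mathbf{T}}\mathbf{P}_i\mathbf{K}_i$, which is nonzero in general. A second, smaller mismatch you share with the paper: since $\mathbf{u}_i[k]=\mathbf{K}_i\hat{\mathbf{x}}_i[k]$ rather than $\mathbf{K}_i\mathbf{x}_i[k]$, the control offset is $\mathbf{K}_i(\hat{\mathbf{x}}_i[k]-\mathbf{x}_{\text{opt},i}[k])$, not $\mathbf{K}_i\overline{\mathbf{x}}_i[k]$. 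So your plan faithfully reproduces the paper's derivation, including its unproved step; to make \eqref{LQ error} rigorous one would have to either show these extra terms vanish (they do not in general) or carry them as an additional contribution to $J_{\text{offset}}$.
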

\begin{proof}
Using lemma~\ref{lemma1} and elementary matrix caculation, we can derive the following form of the LQ offset.
\begin{equation}\nonumber
\begin{aligned}
J_{\text{offset}} 
& = \lim_{T \to +\infty}\frac{1}{T}\sum_{i=1}^{N}\sum_{k=1}^{T}\mathbb{E}\{{\mathbf{x}_i}^\mathbf{T}[k](\mathbf{Q}_{i}+{\mathbf{K}_{i}}^\mathbf{T}\mathbf{P}_{i}\mathbf{K}_{i})\mathbf{x}_i[k]-\\
&{\mathbf{x}_{\text{opt},i}}^\mathbf{T}[k](\mathbf{Q}_{i}+{\mathbf{K}_{i}}^\mathbf{T}\mathbf{P}_{i}\mathbf{K}_{i})\mathbf{x}_{\text{opt},i}[k]\}\\
& = \lim_{T \to +\infty}\frac{1}{T}\sum_{i=1}^{N}\sum_{k=1}^{T}\mathbb{E}\{{\mathbf{x}_i}^\mathbf{T}[k](\mathbf{Q}_{i}+{\mathbf{K}_{i}}^\mathbf{T}\mathbf{P}_{i}\mathbf{K}_{i})\mathbf{x}_i[k]-\\
&{(\mathbf{x}_{i}+\overline{\mathbf{x}}_i[k])}^\mathbf{T}(\mathbf{Q}_{i}+{\mathbf{K}_{i}}^\mathbf{T}\mathbf{P}_{i}\mathbf{K}_{i})(\mathbf{x}_{i}+\overline{\mathbf{x}}_i[k])\}\\
& = \lim_{T \to +\infty}\frac{1}{T}\sum_{i=1}^{N}\sum_{k=1}^{T}\mathbb{E}\{{\overline{\mathbf{x}}_i}^\mathbf{T}[k](\mathbf{Q}_{i}+{\mathbf{K}_{i}}^\mathbf{T}\mathbf{P}_{i}\mathbf{K}_{i})\overline{\mathbf{x}}_i[k]\}\\
& = \lim_{T \to +\infty}\frac{1}{T}\sum_{i=1}^{N}\sum_{k=1}^{T}\mathbb{E}\{\mathbf{Tr}\{[{ \sum_{j=0}^{\Delta_i[k]-1}(\mathbf{Q}_{i}+{\mathbf{K}_{i}}^\mathbf{T}\mathbf{P}_{i}\mathbf{K}_{i})}\\
&({\mathbf{A}_i}^{j}-{(\mathbf{A}_i+\mathbf{B}_i\mathbf{K}_i)}^{j}){\mathbf{e}_i[k-j-1]{\mathbf{e}_i[k-j-1]}^\mathbf{T}}\\
&{{({\mathbf{A}_i}^{j}-{(\mathbf{A}_i+\mathbf{B}_i\mathbf{K}_i)}^{j})}^\mathbf{T}}\}\}\\ 
& = \lim_{T \to +\infty}\frac{1}{T}\sum_{i=1}^{N}\sum_{k=1}^{T}\sum_{j=0}^{\Delta_i[k]-1}\mathbf{Tr}\{\left[\mathbf{Q}_{i}+{\mathbf{K}_{i}}^\mathbf{T}\mathbf{P}_{i}\mathbf{K}_{i}\right]\\
&{({\mathbf{A}_i}^{j}-{(\mathbf{A}_i+\mathbf{B}_i\mathbf{K}_i)}^{j})}\mathbf{R}_i{({\mathbf{A}_i}^{j}-{(\mathbf{A}_i+\mathbf{B}_i\mathbf{K}_i)}^{j})}^T\}\\
\end{aligned}
\end{equation}
where $\mathbf{R}_i$ is the covariance matrix of Gaussian noise in subsystem $i$.
\end{proof}
As we can see from lemma \ref{lemma2}, the linear quadratic cost offset is a kind of non-linear function of AoI process, which means that the process of Age of Information and the system parameters determine the performance offset of our WNCSs.
\par
\section{An Age-based Scheduling Method}
\par
\par
Non-ideal communication network means constraints on communication resources so that transmission needs for all subsystems cannot be satified simultaneously. Employing effective scheduling policy can optimize resources allocation and therefore improve WNCSs performance. Suppose the information set availabe at the scheduler is $\mathbb{I}_{S}[k]=\{{\{{\mathbf{A}}_i\}}_{i=1}^{N},{\{{\mathbf{B}}_i\}}_{i=1}^{N},{\{{\mathbf{K}}_i\}}_{i=1}^{N},{\{{\mathbf{R}}_i\}}_{i=1}^{N},{\{\Delta_i[k]\}}_{i=1}^{N}\}$. Based on the theory above, we propose an AoI-based scheduling problem which aims to minimize our proposed control cost offset:
\setcounter{equation}{12}
\begin{equation}\label{op1}
\begin{aligned}
\min_{{\alpha[k]}} \quad & J_{\text{offset}}\\
\mbox{s.t.}\quad
&\sum_{i=1}^{N}{\alpha_i}[k] \leq M\\
\end{aligned}
\end{equation}
It's obvious that this is an infinite horizon integer programming. It's hard to solve this infinite horizon problem in polynomial time, so we come up with a greedy solution to solve it. As scheduling in current time slot will influence the performance of our WNCSs in the next time slot. we try to minimize total control cost offset contribution of all sub-systems in every time slot $k$ by picking M sub-systems which have maximum expected control cost offsets $J_{\text{offset},i}[k]$:
\begin{equation}\label{control cost gap}
\begin{aligned}
J_{\text{offset},i}[k] 
& = \sum_{j=0}^{\Delta_i[k]}\mathbb{E}\{\mathbf{Tr}\{\left[\mathbf{Q}_{i}+{\mathbf{K}_{i}}^\mathbf{T}\mathbf{P}_{i}\mathbf{K}_{i}\right]\\
&{({\mathbf{A}_i}^{j}-{(\mathbf{A}_i+\mathbf{B}_i\mathbf{K}_i)}^{j})}\mathbf{R}_i{({\mathbf{A}_i}^{j}-{(\mathbf{A}_i+\mathbf{B}_i\mathbf{K}_i)}^{j})}^T\}\}\\
\end{aligned}
\end{equation}
And the greedy optimization problem is:
\begin{equation}\label{op2}
\begin{aligned}
\max_{{\alpha[k]}} \quad &  \sum_{i=1}^{N}\alpha_i[k]J_{\text{offset},i}[k]  \\
\mbox{s.t.}\quad
&\sum_{i=1}^{N}{\alpha_i}[k] \leq M
\end{aligned}
\end{equation}
The scheduling policy is given in algorithm \ref{Framework}.
\begin{algorithm}[!t] 
\caption{ Framework of our scheduling policy for WNCSs.} 
\label{Framework} 
\begin{algorithmic}[1] 
\REQUIRE ~~\\ 
The parameters of all N subsystems, ${\{{\mathbf{A}}_i\}}_{i=1}^{N}$, ${\{{\mathbf{B}}_i\}}_{i=1}^{N}$, ${\{{\mathbf{K}}_i\}}_{i=1}^{N}$;
The set of Age of information of all N subsystems at time slot $k$, ${\{\Delta_i[k]\}}_{i=1}^{N}$;\\
The covariance matrices of noise of all N subsystems, ${\{{\mathbf{R}}_i\}}_{i=1}^{N}$;
\ENSURE ~~\\ 
The set of scheduling states,  $\bm{\alpha}[k]$;
\STATE Initializing scheduling states with $\alpha_i[k] = 0$; 
\label{code:fram:Initialize}
\STATE Calculating expected control cost offsets $J_{\text{offset},i}[k]$ with the help of $\mathbf{A}_i$, $\mathbf{B}_i$, $\mathbf{K}_i$, $\mathbf{R}_i$ and $\Delta_i[k]$; 
\label{ code:fram:Calculate }
\STATE Extracting M sub-systems with the largest offsets $J_{\text{offset},i}[k]$ and put their scheduling states with value 1; 
\label{code:fram:Extract}
\RETURN $\bm{\alpha}[k]$; 
\end{algorithmic}
\end{algorithm}

In the next section, we show that our proposed scheduling policy has performance advantages in LQ cost offset metric compared to other existed scheduling policies.

\section{Numerical Results}
\par
To illustrate our proposed scheduling policy, we now continue with giving simulations. 
\subsection{Simulation Setup}
Empiric cost is used to approximately represent our proposed cost offset criterion and average-cost criterion for ease of calculation, and can be calculated by:
\begin{equation}\label{empiric cost}
\begin{aligned}
&J_{\text{offset}} \approx \frac{1}{T}\sum_{i=1}^{N}\sum_{k=1}^{T}\sum_{j=0}^{\Delta_i[k]-1}\mathbf{Tr}\{\left[\mathbf{Q}_{i}+{\mathbf{K}_{i}}^\mathbf{T}\mathbf{P}_{i}\mathbf{K}_{i}\right]\\
&{({\mathbf{A}_i}^{j}-{(\mathbf{A}_i+\mathbf{B}_i\mathbf{K}_i)}^{j})}\mathbf{R}_i{({\mathbf{A}_i}^{j}-{(\mathbf{A}_i+\mathbf{B}_i\mathbf{K}_i)}^{j})}^T\}\\
&J \approx  \frac{1}{T}\sum_{i=1}^{N}\sum_{k=1}^{T}\mathbb{E}\{{\mathbf{x}_{i}}^\mathbf{T}[k]\mathbf{Q}_{i}\mathbf{x}_{i}[k]+{\mathbf{u}_{i}}^\mathbf{T}[k]\mathbf{P}_{i}{\mathbf{u}_{i}}[k]\}
\end{aligned}
\end{equation}
where $T = 5000$ is the simulation time length in our work.
\par
We perform simulations where $N=8$ subsystems sharing the wireless network and scheduler makes full use of historical information $\mathbb{I}_{S}[k]$ to decide which sub-systems should transmit their states. The parameters and control laws of all eight subsystems we considered are shown in Table I and Table II. Furthermore, the initial state is given with the deterministic initial value $\mathbf{x}_i[0] = {[1,1]}^T$.
\par
\par
\par
To evaluate our policy, we introduce four different scheduling policies to compare.
\begin{itemize}
\item \textbf{AOI-minimal policy} In every time slot, the scheduler chooses $M$ subsystems with the maximal AOI to transmit. As the conclusion in \cite{kadota2018scheduling}, this greedy scheduling is age-optimal.
\item \textbf{Estimation error minimal policy} In every time slot, the scheduler chooses $M$ subsystems with the maximal quadratic state estimation error\cite{ayan2019age} to transmit.
\item \textbf{Round-Robin policy} In every time slot, the scheduler allocates the communication resources sequentially.
\item \textbf{Random policy} In every time slot, the scheduler randomly chooses $M$ subsystems to transmit.
\end{itemize}
\par
\subsection{Time sensitive systems}
Firstly, we consider a time sensitive system where most subsystems have a system state matrix with eigenvalues of which real part is larger than one, which means that they will suffer perfermance loss if there is no fresh state information updated. In the mean time, different systems will bring different contributions to the control cost in the form of different $\mathbf{Q}_i$.  The parameters and control laws of such system are showed in Table I.
\begin{table}[!t]
\caption{System Parameters and control laws of time sensitive systems}
\begin{center}
\begin{tabular}{|c|c|c|c|c|c|c|}
\hline
$i$&$\mathbf{A}_i$&$\mathbf{B}_i$&$\mathbf{Q}_i$&$\mathbf{P}_i$&$\mathbf{R}_i$&$\mathbf{K}_i$\\
\hline
1& $1.1*\mathbf{S}$&$\mathbf{I}$&$100*\mathbf{I}$&$\mathbf{I}$&$0.25*\mathbf{I}$&$-0.2*\mathbf{I}$  \\
\hline
2& $1.1*\mathbf{S}$&$\mathbf{I}$&$100*\mathbf{I}$&$\mathbf{I}$&$0.25*\mathbf{I}$&$-0.3*\mathbf{I}$  \\
\hline
3& $1.2*\mathbf{S}$&$\mathbf{I}$&$10*\mathbf{I}$&$\mathbf{I}$&$0.25*\mathbf{I}$&$-0.4*\mathbf{I}$  \\
\hline
4& $1.2*\mathbf{S}$&$\mathbf{I}$&$8*\mathbf{I}$&$\mathbf{I}$&$0.25*\mathbf{I}$&$-0.6*\mathbf{I}$  \\
\hline
5& $1.3*\mathbf{S}$&$\mathbf{I}$&$6*\mathbf{I}$&$\mathbf{I}$&$0.25*\mathbf{I}$&$-0.8*\mathbf{I}$  \\
\hline
6& $1.3*\mathbf{S}$&$\mathbf{I}$&$4*\mathbf{I}$&$\mathbf{I}$&$0.25*\mathbf{I}$&$-1*\mathbf{I}$  \\
\hline
7& $1.4*\mathbf{S}$&$\mathbf{I}$&$2*\mathbf{I}$&$\mathbf{I}$&$0.25*\mathbf{I}$&$-1.2*\mathbf{I}$  \\
\hline
8& $1.4*\mathbf{S}$&$\mathbf{I}$&$\mathbf{I}$&$\mathbf{I}$&$0.25*\mathbf{I}$&$-1,2*\mathbf{I}$  \\
\hline 
\multicolumn{6}{l}{
where $\mathbf{I}=\begin{aligned}
\left[
\begin{array}{cc}
1&0\\
0&1
\end{array}
\right]
\end{aligned}$  and $\mathbf{S}=\begin{aligned}
\left[
\begin{array}{cc}
1&0.2\\
-0.2&1
\end{array}
\right]
\end{aligned}$}
\end{tabular}
\label{tab1}
\end{center}
\end{table}

\begin{figure}[!t]
\centerline{\includegraphics[width=2.5in]{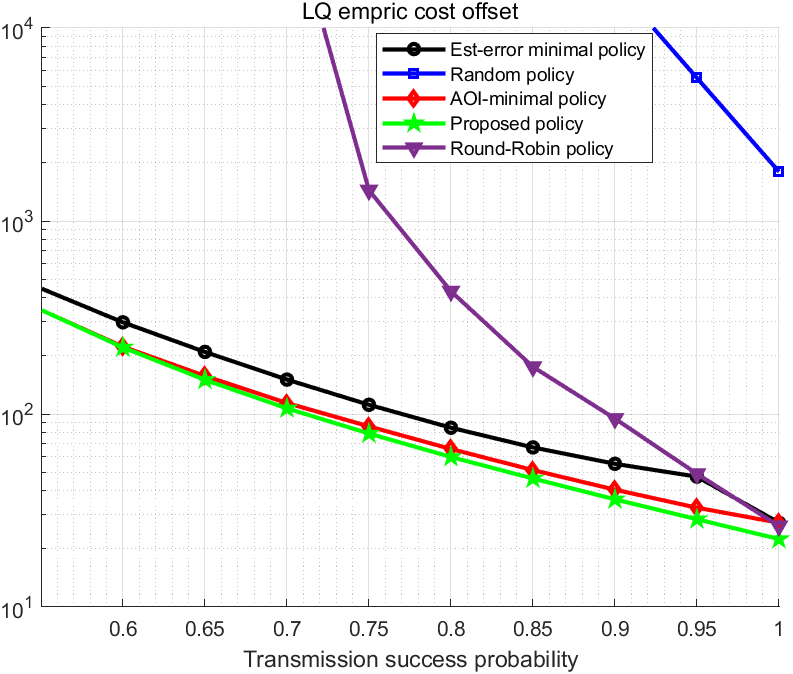}}
\caption{Cost offset comparison of the five types of scheduling policies with different channel states $p$ and $M=3$ wireless resources in time sensitive systems.}
\label{fig3}
\end{figure}

\begin{figure}[!t]
\centerline{\includegraphics[width=2.5in]{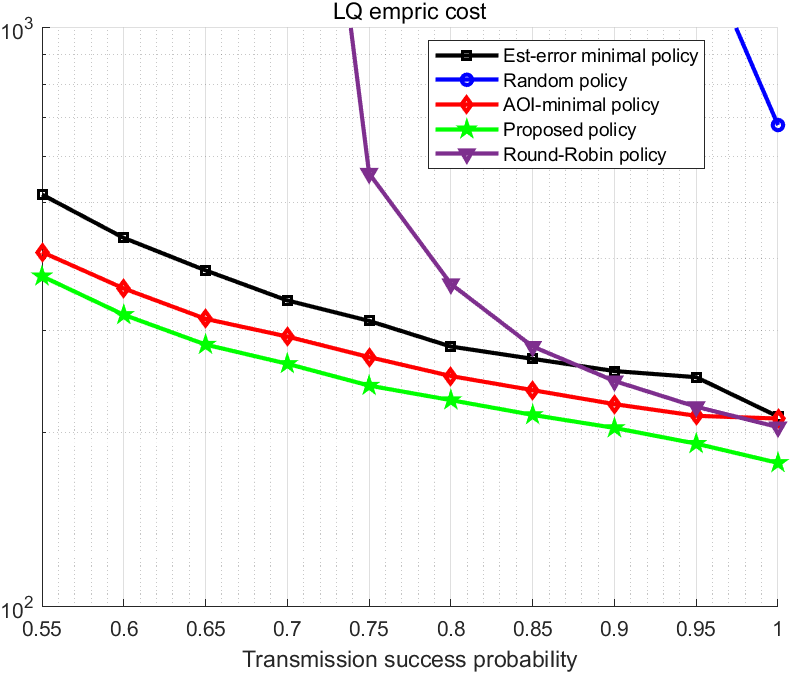}}
\caption{LQ cost comparison of the five types of scheduling policies with different channel states $p$ and $M=3$ wireless resources in time sensitive systems.}
\label{fig4}
\end{figure}
\begin{figure}[!t]
\centerline{\includegraphics[width=2.5in]{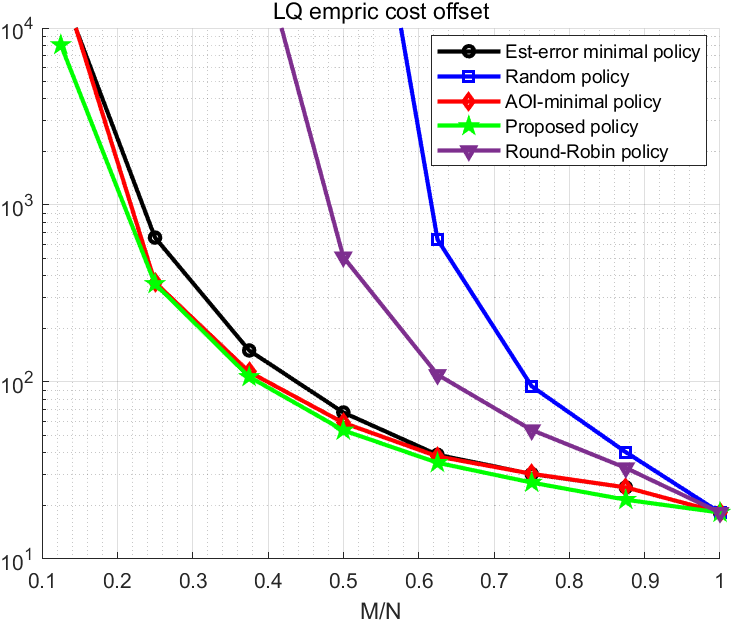}}
\caption{Cost offset comparison of the five types of scheduling policies with different wireless resources $M$ per time slot and transmission success rate $p=0.7$.}
\label{fig5}
\end{figure}
\begin{figure}[!t]
\centerline{\includegraphics[width=2.5in]{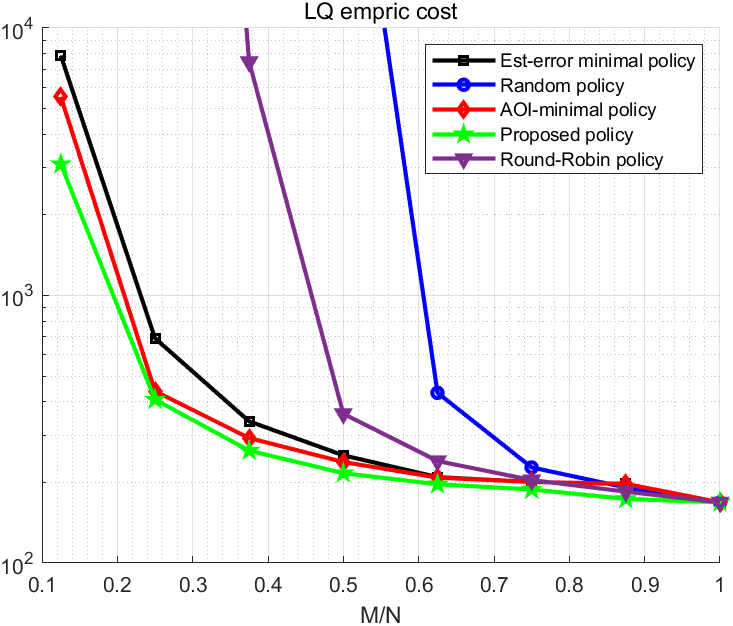}}
\caption{LQ cost comparison of the five types of scheduling policies with different wireless resources $M$ and transmission success rate $p=0.7$.}
\label{fig6}
\end{figure}
Figure \ref{fig3} and Figure \ref{fig5} compare linear quadratic empiric cost offset for AoI-minimal policy, Estimation error minimal policy, Round-Robin Policy, Random policy and our proposed policy which minimize real-time control offset of the whole systems in time sensitive scenarios. Different transmission success probabilities $p$ and resource ratios $\frac{M}{N}$ are considered,  and it can be observed that random policy has the worst control performance. With the increase of trasmission success probability and resource ratios, the cost offsets gradually reduce. When transmission resources are abundant, all policies perform well beacause almost every system can be served in every time slot. We obtain some gain compared to the existed policies in such systems no matter what transmission conditions are. 
\par
Figure \ref{fig4} and Figure \ref{fig6} show comparison of linear quadratic empiric cost for the five types of policys. Same as the cost offset, our policy has an advantage in performance. In the meantime, it is worth mentioning that there is a similar characterization of performance, which means we could optimize our proposed cost offset of such systems to indirectly optimize the LQ cost to some extent. It will be helpful for system co-design because sometimes it is hard to solve complex control cost optimization. 
\par
\par
\subsection{Large characteristic time systems}
On the other hand, we take stable systems with large characteristic time into consideration as well, which is shown in Table II. These stable systems are designed to have large characteristic time and therefore have better performance in the accuracy of control and noise immunity. 
We suppose that added feedback loops in some systems reduce the convergence speed and improves the noise immunity. For example, real part of eigenvalue of $\mathbf{A}_1$ is smaller than that of $\mathbf{A}_1+\mathbf{B}_1\mathbf{K}_1$.
\begin{table}[!t]
\caption{System Parameters and control laws of stable systems with large characteristic time}
\begin{center}
\begin{tabular}{|c|c|c|c|c|c|c|}
\hline
$i$&$\mathbf{A}_i$&$\mathbf{B}_i$&$\mathbf{Q}_i$&$\mathbf{P}_i$&$\mathbf{R}_i$&$\mathbf{K}_i$\\
\hline
1& $0.1*\mathbf{S}$&$\mathbf{I}$&$\mathbf{I}$&$\mathbf{I}$&$0.25*\mathbf{I}$&$0.8*\mathbf{I}$  \\
\hline
2& $0.1*\mathbf{S}$&$\mathbf{I}$&$\mathbf{I}$&$\mathbf{I}$&$0.25*\mathbf{I}$&$0.75*\mathbf{I}$  \\
\hline
3& $0.2*\mathbf{S}$&$\mathbf{I}$&$\mathbf{I}$&$\mathbf{I}$&$0.25*\mathbf{I}$&$0.5*\mathbf{I}$  \\
\hline
4& $0.2*\mathbf{S}$&$\mathbf{I}$&$\mathbf{I}$&$\mathbf{I}$&$0.25*\mathbf{I}$&$0.455*\mathbf{I}$  \\
\hline
5& $0.3*\mathbf{S}$&$\mathbf{I}$&$\mathbf{I}$&$\mathbf{I}$&$0.25*\mathbf{I}$&$-0.05*\mathbf{I}$  \\
\hline
6& $0.3*\mathbf{S}$&$\mathbf{I}$&$\mathbf{I}$&$\mathbf{I}$&$0.25*\mathbf{I}$&$-0.1*\mathbf{I}$  \\
\hline
7& $0.4*\mathbf{S}$&$\mathbf{I}$&$\mathbf{I}$&$\mathbf{I}$&$0.25*\mathbf{I}$&$-0.35*\mathbf{I}$  \\
\hline
8& $0.4*\mathbf{S}$&$\mathbf{I}$&$\mathbf{I}$&$\mathbf{I}$&$0.25*\mathbf{I}$&$-0.38*\mathbf{I}$  \\
\hline
\multicolumn{6}{l}{where $\mathbf{I}=\begin{aligned}
\left[
\begin{array}{cc}
1&0\\
0&1
\end{array}
\right]
\end{aligned}$  and $\mathbf{S}=\begin{aligned}
\left[
\begin{array}{cc}
1&0.2\\
-0.2&1
\end{array}
\right]
\end{aligned}$}
\end{tabular}
\label{tab1}
\end{center}
\end{table}
\begin{figure}[!t]
\centerline{\includegraphics[width=2.5in]{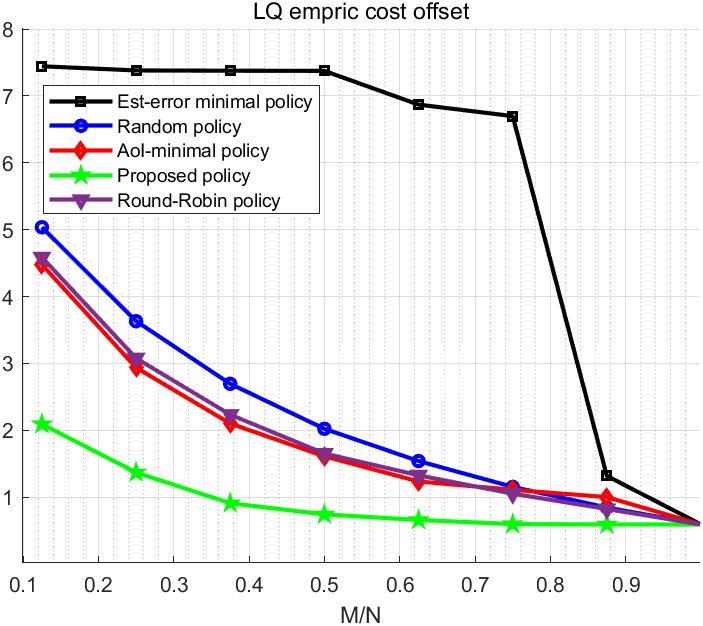}}
\caption{LQ cost comparison of the five types of scheduling policies with different resource allocation rates and 0.7 transmission success probabilities.}
\label{fig7}
\end{figure}
\begin{figure}[!t]
\centerline{\includegraphics[width=3.1in]{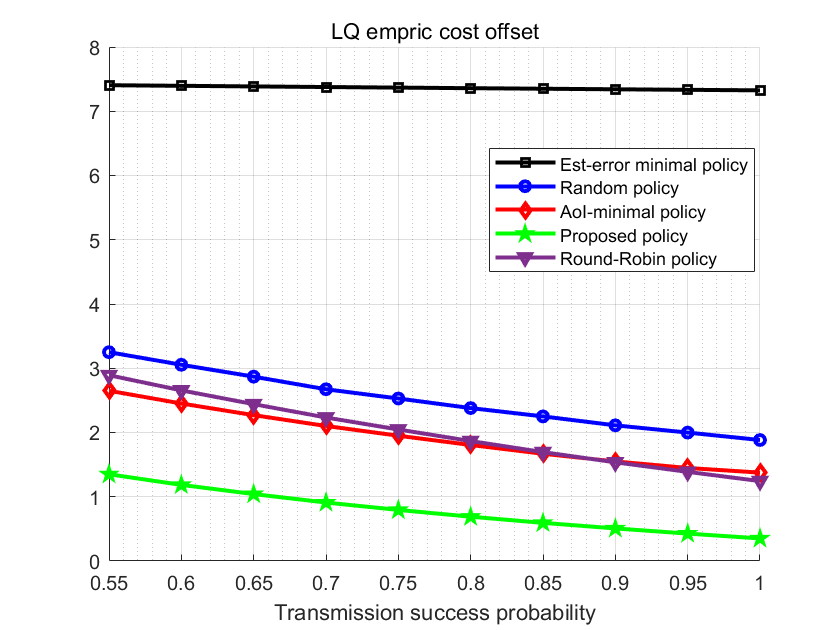}}
\caption{LQ cost comparison of the five types of scheduling policies with different transmission success rates and $M=3$ wireless resources.}
\label{fig8}
\end{figure}
\par
From Figure \ref{fig7} and Figure \ref{fig8}, we can see that compared to the time sensitive systems situation, our proposed policy obtains much more gains in cost offset criterion.  As cost offset criterion indicates the offset between actual situation and the ideal design, our proposed policy has better performance in mitigating non-ideal communication loss. Furthermore, Estimation error minimal policy has a deteriorating performance due to limitations of simply optimizing estimation error.
\section{Conclusion and Furture Work}
This paper has studied multi-loop wireless networked control systems which share a common wireless channel, presented a new metric criterion to descirbe the performance offset between situations with ideal communication and non-ideal communication, and has proposed a scheduling policy which at every time slot chooses subsystems with maximum expected offset cost to transmit. Simulation results show that compared to existed policies, our proposed policy may help to improve control performance in many different WNCSs cases.
Other future research directions include partial state transmission scheduling, distributed control transmission scheme design and control law design based on the process of AoI.
\bibliographystyle{plain}
\bibliography{cite}
\end{document}